\newtheorem{theorem}{{Theorem}}
\newtheorem{lemma}{{Lemma}}
\newtheorem{definition}{{Definition}}
\newtheorem{remark}{Remark}
\theoremstyle{remark}
\theoremstyle{claim}
\newtheorem{example}{Example}
\newcommand{\Z}{\mathbb{Z}}
\newcommand{\F}{\mathbb{F}}
\newcommand{\Tr}{\mathrm{Tr}}
\newcommand{\cF}{\mathcal{F}}
\newcommand{\cT}{\mathcal{T}}
\newcommand{\cM}{\mathcal{M}}
\newcommand{\cS}{\mathcal{S}}
\newcommand{\cU}{\mathcal{U}}
\newcommand{\Au}{\mathbf{A1}}
\newcommand{\Av}{\mathbf{A2}}
\newcommand{\w}{\mathbf{w}}
\renewcommand{\Im}{\mathrm{Im}}
\begin{document}
\title[Strictly Optimal Frequency-Hopping Sequence Sets]{Construction of three classes of  Strictly Optimal Frequency-Hopping Sequence Sets}

\author[Yi Ouyang et al.]{Yi Ouyang$^1$, Xianhong Xie$^2$, Honggang Hu$^2$ and Ming Mao$^3$}
\address{$^1$Wu Wen-Tsun Key Laboratory of Mathematics,  School of Mathematical Sciences, University of Science and Technology of China, Hefei, Anhui 230026, China}

\email{yiouyang@ustc.edu.cn}

\address{$^2$University of Science and Technology of China, Key Laboratory of Electromagnetic Space Information, CAS, Hefei, Anhui 230027, China}

\email{hghu2005@ustc.edu.cn}
\email{xianhxie@mail.ustc.edu.cn}

\address{$^3$Beijing Electronic Science and Technology Institute, Beijing 100070, China}

\email{maomingdky@163.com}

\thanks{Research is partially supported by Anhui Initiative in Quantum Information Technologies (Grant No. AHY150200) and NSFC (Grant No. 11571328).}
\subjclass[2010]{11B50, 94A55, 94A60}

\begin{abstract}
In this paper, we construct three classes of strictly optimal frequency-hopping sequence (FHS) sets with respect to partial Hamming correlation and family size. The first class is based on a generic construction, the second and third classes are based from the trace map.
\smallskip

\noindent\textbf{Keywords} Frequency-hopping sequences, partial Hamming correlation, Difference-balanced functions.
\end{abstract}
\maketitle

\section{Introduction}
With advantages such as secure properties, multiple-access, anti-jamming, and anti-fading, frequency-hopping multiple-access (FHMA) is now widely used in modern communication systems such as military communications, bluetooth, sonar echolocation systems and so on \cite{1}. To reduce the multi-access interference, in those systems, the maximum of Hamming out-of-phase autocorrelation and cross correlation of the set of frequency-hopping sequences (FHSs) must be minimized. Thus, it is very interesting to design FHS sets with low Hamming correlation, large size, long period, and small available frequencies simultaneously. In fact, the parameters are not independent with each other, and they are subjected to limitation of some theoretic bounds, for example, the Lempel-Greenberger bound \cite{2}, the Peng-Fan bound \cite{3}, or the coding theory bound \cite{4}. Therefore, it has received a lot of attention about constructing optimal FHSs with respect to the bounds and much progress have been made
(see \cite{5}-\cite{8}, \cite{9}-\cite{15} and the references therein).

The Hamming correlation of FHSs can be divided into three types in general: the periodic Hamming correlation, the aperiodic Hamming correlation, and the partial Hamming correlation. Compared with the first type, the results are relatively little known about the aperiodic and partial ones. However, in the practical application scenarios where the synchronization time is limited or the hardware is complex, the length of a correlation window is usually shorter than the period of the chosen FHSs \cite{17}, i.e., the correlation window length may vary from time to time according to the channel conditions. Consequently, the partial Hamming correlation begin to attract attention. Eun et al. \cite{17} obtained a class of FHSs with optimal partial Hamming correlation from the $m$-sequence and GMW sequences over polynomial residue class ring. In 2012, Zhou et al. \cite{16} derived FHS sets with optimal partial Hamming correlation from trace functions, and generalized the Peng-Fan bounds on the periodic Hamming correlation based on the array structure.
\begin{table}[!htbp]
\setlength{\abovecaptionskip}{0.cm}
\setlength{\belowcaptionskip}{-0.cm}
\caption{Known And Strictly Optimal FHS Sets}
\setlength{\tabcolsep}{0.5mm}{
 \begin{tabular}{|c|c|c|c|c|c|}
     \hline
      \multirow{2}{*}{Length}  & \multirow{1}{*}{Alphabet}  & \multirow{2}{*}{$\cM(\cF;L)$} & \multirow{1}{*}{Family} &\multirow{2}{*}{Constraints} & \multirow{2}{*}{Reference} \\
      & Size& &  Size& & \\
             \hline
       \multirow{2}{*}{$\frac{p^m-1}{r}$} & \multirow{2}{*}{$p^{m-1}$} & \multirow{2}{*}{$\left\lceil\frac{L}{T}\right\rceil$} &  \multirow{2}{*}{$r$}  & \multirow{1}{*}{$\psi(x)$ is identity,}  &\multirow{2}{*}{\cite{16}}\\
       & & & & $r|p-1$, $\gcd(r,m)=1$ &\\
       \hline
       $p^{2m}-1$ & $p^{m}$ & $\left\lceil\frac{L}{p^m+1}\right\rceil$ &  1  &
        & \cite{17}\\
       \hline
      $p(p^m-1)$ & $p^m$ & $\left\lceil\frac{L}{p^m-1}\right\rceil$ &  $p^{m-1}$  &
        & \cite{18}\\
         \hline
       $p^k(p^m-1)$ & $p^m$ & $\left\lceil\frac{L}{p^m-1}\right\rceil$ &  $p^{m-k}$  & $k\mid  m$
        & Theorem~\ref{theorem:2} here\\
       \hline
       \multirow{2}{*}{$\frac{p^m-1}{r}$} & \multirow{2}{*}{$p^{m-1}$} & \multirow{2}{*}{$\left\lceil\frac{L}{T}\right\rceil$} &  \multirow{2}{*}{$r$}  & \multirow{1}{*}{$\psi(x)$,}  &\multirow{2}{*}{ Theorem~\ref{theorem:3} here}\\
       & & & & $r|p-1$, $\gcd(r,m)=1$ &\\
       \hline
\multirow{2}{*}{$\frac{p^{2m}-1}{r}$} &\multirow{2}{*}{$p^m$} & \multirow{2}{*}{$\left\lceil\frac{L}{p^m+1}\right\rceil$ }&\multirow{2}{*}{$r$}  &\multirow{1}{*}{$f(x)$,} & \multirow{2}{*}{Theorem~\ref{theorem:4} here}\\
        & & & & $r|p-1$, $\gcd(r,2m)=1$ &\\
       \hline
   \end{tabular}}\\
   \begin{tablenotes}
        \footnotesize
        \item[1] $T=\frac{q-1}{p-1}$, $q=p^m$ and $\gcd(d,p-1)=1$; $\psi(x)$ is $\F_p$-linear automorphism of $\F_{q}$; $f:\F_{q^2}\rightarrow\F_q$ is difference-balanced function.
      \end{tablenotes}
\end{table}
In 2014, Cai et al. \cite{18} presented FHS sets with optimal partial Hamming correlation from generalized cyclotomy. Later, the authors gave some theoretic bounds of the size of FHS sets and presented a new class of FHSs with optimal partial Hamming correlation in \cite{19}. Very recently, combinatorial constructions of FHSs with optimal partial Hamming correlation have been reported, see \cite{20,21,22}.

The purpose of this paper is to construct three classes of strictly optimal FHS sets with optimal partial Hamming correlation and optimal family sizes. We list the parameters of our construction and related known ones in Table~1, which gives a comparison of our construction and the constructions before us.

Our first construction generates optimal $(p^k(p^m-1),p^{m-k},\left\lceil\frac{L}{p^m-1}\right\rceil;p^k)$ FHS sets with $k\mid m$. It can be viewed as a generalization of the construction \cite{18}. However, under our generalization we can present the more strictly optimal FHS sets that can't be obtained by \cite{18} (see Theorem~\ref{theorem:2}). Although our second and third constructions does not give new parameters, it provides us a large number of choice for $\psi(x)$ and $f(x)$ (see Table~1).

\section{Preliminaries}
Throughout this paper we shall use the following notations.
\begin{itemize}
\item $p$ is an odd prime and $q=p^m>1$ is a $p$-power;
\item $\F_{q}$ is the finite field of $q$ elements, and $\F_{q}^{*}=\F_q-\{0\}$ is the multiplicative group of $\F_{q}$;
\item $\alpha$ is a primitive element of $\F_{q}$;
\item $\Tr_{q^n/q}(x)=\sum\limits^{n-1}_{i=0}x^{q^{i}}$ is the trace map from $\F_{q^n}$ to its subfield $\F_{q}$;
\item For $t\in \Z$, $\langle t\rangle_n\in \{0,\cdots, n-1\}$ denotes the remainder of $t$ by $n$.
\item For all sequences $(x_t)_{t=0}^{n-1}$ indexed by a subscript $t\in \{0,\cdots, n-1\}$, we denote $x_t=x_{\langle t\rangle_n}$ for any $t\in \Z$.
\item For $X$ a finite set, $\# X$ denotes the cardinality of $X$. For the map $f: X\rightarrow Y$ and $y\in Y$, $f^{-1}(y)=\{x\in X:\ f(x)=y\}$ is the set of preimages of $y$.
\end{itemize}

\subsection{Strictly Optimal FHS Sets}
Let $F=\{f_{0},f_{1},\ldots,f_{d'-1}\}$ be an alphabet of $d'$ available frequencies. Let $\cF$ be a  set of $M$ frequency-hopping sequences of the form $(x_i)_{i=0}^{N-1}$ with $x_i\in F$. We also call $\cF$ an $(N,M;d')$-FHS set. For $X=(x_t)_{t=0}^{N-1}$ and $Y=(y_t)_{t=0}^{N-1}$ in $\cF$,  the partial Hamming correlation of $X$ and $Y$ over a window of length $L\in \{1,\cdots,N\}$ starting from $j\in \{0,\cdots, N-1\}$ is
\begin{equation}\label{eq:1}
H_{X,Y}(\tau;j\mid L):=\sum_{t=j}^{j+L-1}h[x_{t},y_{t+\tau}]\quad  (\tau\in \{0,\cdots, N-1\}),
\end{equation}
where $h[a,b]=1$ if $a=b$ and $0$ otherwise (i.e. the Kronecker $\delta$-function). In other words,
\begin{equation}\label{eq:1prime}
H_{X,Y}(\tau;j\mid L)=\#\{t:\ j\leq t\leq j+L-1,\ x_t=y_{t+\tau}\}.
\end{equation}
If $X=Y$ (resp. $X\neq Y$),  $H_{X,X}(\tau;j\mid L)$ (resp. $H_{X,Y}(\tau;j\mid L)$) is called the partial Hamming autocorrelation (resp. cross-correlation) of $X$ (resp. $X$ and $Y$).
Define
 \begin{align}
 & H(X;L):=\max_{\substack{0\leq j<N \\ 1\leq\tau<N}}H_{X,X}(\tau;j\mid L)\quad (X\in \cF), \label{eq:2}\\
 & H(X,Y;L):=\max_{\substack{0\leq j<N \\ 0\leq\tau<N}}H_{X,Y}(\tau;j\mid L)\quad  (X\neq Y\in \cF),\label{eq:3} \\
 & \cM(\cF;L):=\max_{\substack{ X,Y\in \cF \\
 X\neq Y}}\{H(X;L),H(X,Y;L)\label{eq:4}\}.
 \end{align}

In 2009, Niu et al. \cite{23} obtained the following bound on the maximum partial Hamming correlation of FHS sets: for any $(N,M;d')$-FHS set $\cF$, for any window length $1\leq L\leq N$,
\begin{equation} \label{eq:6}
 \cM(\cF;L)\geq\left\lceil \frac{L}{N}\left\lceil\frac{(NM-d')N}{(MN-1)d'}\right\rceil\ \right\rceil,
\end{equation}
and
\begin{equation} \label{eq:7}
\cM(\cF;L)\geq\left\lceil \frac{[2IMN-(I+1)Id']L}{(MN-1)MN}\right\rceil,
\end{equation}
where $I=\left\lfloor\frac{MN}{d'}\right\rfloor$. The partial Hamming correlation bound in ~\eqref{eq:6} is the bound  proved by Eun et al. \cite{17} for the case $M=1$, the Lempel-Greenberger bound by \cite{2} for $L=N$ and $M=1$, and the Peng-Fan bound by \cite{3} for $L=N$.


In 2016, inspired by the idea of Ding et al. \cite{4}, Cai et al.  \cite{18} obtained the following results: for any $(N,M;d')$-FHS set $\cF$,
\begin{equation} \label{eq:8}
 M \leq \min_{2\leq L\leq N}\left\{\left\lfloor \frac{1}{N}\left\lfloor\frac{(L-\cM(\cF;L))d'}
 {L-d'\cM(\cF;L)}\right\rfloor\right\rfloor:\
 L>d'\cM(\cF;L)\right\},
\end{equation}
and
\begin{equation} \label{eq:9}
 M \leq \min_{2\leq L\leq N}\left\{\left\lfloor\frac{d'^{\cM(\cF;L)+1}}{N}
 \right\rfloor:\ L>\cM(\cF;L)\right\}.
\end{equation}

\begin{definition}\label{definition:1} Let $\cF$ be an $(N,M;d')$-FHS set.

(1) $\cF$  is said to be strictly optimal with respect to the partial Hamming correlation if one of the bounds in \eqref{eq:6} or \eqref{eq:7} is achieved for any correlation window length $1\leq L\leq N$.

(2) $\cF$  is said to be strictly optimal  with respect to family size if one of the bounds in \eqref{eq:8} or \eqref{eq:9} is achieved for any correlation window length $2\leq L\leq N$.
\end{definition}

\subsection{Difference-balanced functions}
\begin{definition}\label{definition:2}
A function $f(x): \F_{q^m}\rightarrow \F_{q}$ is called balanced if $\# f^{-1}(x)=q^{m-1}$ for each $x\in \F_q$. It is called difference-balanced if the difference function $f(\delta x)-f(x)$ is balanced for any $\delta\in \F_{q^m}\setminus\{0,1\}$.
\end{definition}
\begin{remark} In the literature (see \cite{32}), balanced and difference-balanced functions are defined over $\F_{q^m}^*$. However, the following is clear.  If assigning $f(0)=0$ to a balanced function $f$ over $\F_{q^m}^*$, one gets a balanced function over $\F_{q^m}$; for a balanced function $f$ over $\F_{q^m}$, then $f-f(0)$ is a balanced function over $\F_{q^m}^*$. If assigning $f(0)=b$ for any $b\in \F_q$ to a difference-balanced function $f$ over $\F_{q^m}^*$, one get a difference-balanced function over $\F_{q^m}$; for a difference-balanced function $f$ on $\F_q$, the restriction of $f$ on $\F_{q^m}^*$ is a difference-balanced function over $\F_{q^m}^*$.
\end{remark}

Pott-Wang~\cite{32} tells us that  a difference-balanced  function $f$ such that $f(0)=0$ is always balanced. Moreover, the following is the list of all known difference-balanced functions from $\F_{q^m}$ to $\F_{q}$ satisfying $f(0)=0$:
\begin{enumerate}
\item[(0)] Functions which are surjective and $\F_q$-linear.

\item[(1)] Functions of the form
\begin{equation}\label{eq:10}
f(x)=\Tr_{q^m/q}(x^d),
\end{equation}
where $d$ is a positive integer prime to $q^m-1$.

\item[(2)] Functions of Helleseth-Gong  type, which was  discovered in \cite{31}.

\item[(3)] Functions of Lin type
\begin{equation}\label{eq:11}
   f(x)=\Tr_{3^m/3}(x+x^s),
   \end{equation}
   where $q=3$, $m=2l+1$ and $s=2\times3^l+1$. The difference balance property of functions of this  type was a conjecture of Lin \cite{25} and proved by Hu et al. \cite{30}.

\item[(4)] Functions which are composites of functions of the previous types (when the composition is legal).
\end{enumerate}
\begin{definition}\label{definition:3}
Let $d$ be an integer prime to $q-1$. A function $f(x)$ from $\F_{q^n}$ onto $\F_{q}$ is called a $d$-form function if
\[f(yx)=y^df(x)\]
for any $y\in \F_{q}$ and $x\in \F_{q^n}$.
\end{definition}

By definition, it is easy to see all the known difference-balanced functions are $d$-form functions: functions of type (0) are $1$-form functions, of type (1)  are $d$-form functions, and  of type (2) and (3)  are $1$-form functions for the case $q=p$.

 \begin{lemma} \label{lemma:balance} Let  $f: \F_{q^2}\rightarrow \F_q$ be a $d$-form difference-balanced function  and $\delta\notin\{0,1\}$,  let $f_\delta(x)=f(\delta x)-f(x)$. Then
 $f^{-1}_\delta(0)=\F_q\cdot a_\delta$ for some $a_\delta\in \F^*_{q^2}$.
 \end{lemma}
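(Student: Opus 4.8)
The plan is to combine a counting input coming from the difference-balanced hypothesis with a homogeneity input coming from the $d$-form hypothesis; together they will force $f_\delta^{-1}(0)$ to be a single $\F_q$-line through the origin.

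First I would record the size of the zero set. Since $f$ is difference-balanced and $\delta\notin\{0,1\}$, the map $f_\delta=f(\delta x)-f(x)\colon \F_{q^2}\to\F_q$ is balanced, so by Definition~\ref{definition:2} applied with $m=2$ we get $\#f_\delta^{-1}(0)=q^{2-1}=q$. I would also note that $0\in f_\delta^{-1}(0)$, since $f_\delta(0)=f(0)-f(0)=0$, so in particular the zero set is strictly larger than $\{0\}$ because $q\geq 3>1$.

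Next I would show that $f_\delta^{-1}(0)$ is stable under multiplication by $\F_q$. Take $x$ with $f_\delta(x)=0$ and any $y\in\F_q$. Applying the $d$-form identity $f(yz)=y^d f(z)$ of Definition~\ref{definition:3} to $z=\delta x$ and to $z=x$ yields $f_\delta(yx)=f(y\delta x)-f(yx)=y^d f(\delta x)-y^d f(x)=y^d f_\delta(x)=0$. Hence $yx\in f_\delta^{-1}(0)$, so the zero set is a union of lines of the form $\F_q\cdot a$ through the origin.

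Finally I would count lines. Each nonzero $\F_q$-line in $\F_{q^2}$ has exactly $q$ elements (including $0$), and two distinct such lines meet only at $0$, so a union of $k$ distinct lines has $k(q-1)+1$ elements. Combining with the count from the first step, $k(q-1)+1=q$ forces $k=1$, and therefore $f_\delta^{-1}(0)=\F_q\cdot a_\delta$ for some $a_\delta\in\F_{q^2}^*$, as claimed. I do not expect a serious obstacle; the only points needing care are to apply the $d$-form identity correctly with the scalar $y\in\F_q$ acting on elements of the larger field $\F_{q^2}$, and to use that the balanced count gives the \emph{exact} value $q$ (not merely a bound), which is precisely what rules out $f_\delta^{-1}(0)$ being a union of several lines.
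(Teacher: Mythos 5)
Your proof is correct and follows essentially the same route as the paper's: the balanced property gives the exact count $\#f_\delta^{-1}(0)=q$, the $d$-form identity gives closure of the zero set under multiplication by $\F_q$, and a cardinality comparison finishes the argument. The only cosmetic difference is that you conclude by counting lines ($k(q-1)+1=q$ forces $k=1$), while the paper implicitly compares the single line $\F_q\cdot a_\delta\subseteq f_\delta^{-1}(0)$ against the total count $q$; these are interchangeable.
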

\begin{proof}  On one hand $f^{-1}_\delta(0)$ is of order $q$ as $f_\delta$ is balanced. On the other hand,  $f(c)=c^d f(a)$ if $c\in \F_q$, hence $f_\delta(a)=0$ implies that $f_\delta(ca)=0$ for any $c\in \F_q$.
\end{proof}
\begin{remark}  All known difference-balanced functions $f: \F_{q^2}\rightarrow \F_p$ such that $f(0)=0$ are $d$-form functions belonging to one of the following two types: (i) a surjective $\F_q$-linear map; (ii) $(x\mapsto \psi(\Tr_{q^2/q}(x^d)))$ or $(x\mapsto \Tr_{q^2/q}(\psi(x^d)))$ where $\psi$ is an $\F_q$-linear automorphism of $\F_{q^2}$.
\end{remark}

\section{First Construction of Optimal FHS Sets}
\subsection{A generic construction}
From now on, let $q=p^m$.
\begin{definition}\label{definition:4} For  an $(n,M;q)$-FHS set $\cU:=\{U^i=(u^i_t)_{t=0}^{n-1}:0\leq i\leq M-1\}$  and a function $\phi(x)$ over $\F_q$, we say that $(\cU,\phi)$ satisfies $\Au$  if for any given triple $(i,j,\tau)\in [0,M-1]^2\times [0,n-1]-\{(i,i,0)\}$,
\begin{equation}\label{eq:12}
 \phi(u^i_{t+\tau})-\phi(u^j_t)=\text{constant}\ c(i,j,\tau)\in \F^*_{q}\ \text{for all}\ 0\leq t\leq  n-1.
\end{equation}

For a vector $v=(v_0,v_1,\ldots,v_{n'-1})\in \F^{n'}_{q}$ and a function $\psi(x)$ over $\F_{q}$, we say that $(v,\psi)$ satisfies $\Av$ if  for $b\in \F_q$,
\begin{equation}\label{eq:13}
\max_{0\leq t\leq n'-1}\ \# \{t: \psi(v_{t+\tau})-\psi(v_t)=b, 1\leq \tau\leq n'-1\}=1.
\end{equation}
\end{definition}

\begin{theorem} \label{theorem:1} Let $N=nn'$ with $\gcd(n,n')=1$ and $M$ be positive integers. Suppose $\cU$ is an $(n,M;q)$-FHS set such that $(\cU,\phi)$ satisfies $\Au$. Suppose $v\in \F_q^{n'}$ such that $(v,\psi)$ satisfies $\Av$. Then the FHS set $\cS:=\{S^{i}:\ 0\leq i\leq M-1\}$  with $S^{i}=(s^{i}_t)_{t=0}^{N-1}$ defined by
\begin{equation}\label{eq:14}
  s^{i}_t=\phi(u^{i}_t)+\psi(v_t),\quad 0\leq t\leq N-1,
\end{equation}
is an $(N,M;q)$-FHS set and for each correlation window length $L \in \{1,\cdots, N\}$,
\[\cM(\cS;L)\leq \left\lceil\frac{L}{n'}\right\rceil.\]
\end{theorem}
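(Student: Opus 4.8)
The first claim, that $\cS$ is an $(N,M;q)$-FHS set, is immediate: each $s^i_t=\phi(u^i_t)+\psi(v_t)$ lies in $\F_q$, and since $u^i$ has period $n$, $v$ has period $n'$, and $\gcd(n,n')=1$, the sequence $S^i$ has period $N=nn'$; the $S^i$ are pairwise distinct because equality would force a cross-correlation value of $N$, contradicting the bound proved below. The substance is the estimate on $\cM(\cS;L)$, which I would obtain by bounding every partial correlation value $H_{S^i,S^j}(\tau;j_0\mid L)$ uniformly. So fix a pair $(i,j)$, a shift $\tau$, and a window start $j_0$, subject to the exclusion $(i,j,\tau)\neq(i,i,0)$ built into \eqref{eq:2}--\eqref{eq:4}. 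By \eqref{eq:1prime} I must count the indices $t\in\{j_0,\dots,j_0+L-1\}$ with $s^i_t=s^j_{t+\tau}$, which rearranges to
\[
\phi(u^i_t)-\phi(u^j_{t+\tau})=\psi(v_{t+\tau})-\psi(v_t).
\]

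The plan is to show the left-hand side is constant in $t$ while the right-hand side depends only on $\langle t\rangle_{n'}$. Using the $n$-periodicity of $u^j$ to replace $u^j_{t+\tau}$ by $u^j_{t+\langle\tau\rangle_n}$, property~$\Au$ applied to the triple $(j,i,\langle\tau\rangle_n)$ shows $\phi(u^j_{t+\langle\tau\rangle_n})-\phi(u^i_t)$ is a constant, so the left-hand side equals a constant $C=C(i,j,\tau)$ independent of $t$. Moreover $\Au$ guarantees $C\neq0$ unless we are in the situation $i=j$ and $\langle\tau\rangle_n=0$, where instead $C=0$. For the right-hand side, the $n'$-periodicity of $v$ shows that $g_\tau(t):=\psi(v_{t+\tau})-\psi(v_t)$ depends on $t$ only through $\langle t\rangle_{n'}$.

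I would then split on $\langle\tau\rangle_{n'}$. If $\langle\tau\rangle_{n'}=0$ with $\tau\neq0$ (i.e.\ $\tau$ a nonzero multiple of $n'$), then $g_\tau\equiv0$ while coprimality forces $\langle\tau\rangle_n\neq0$ and hence $C\neq0$, so there is no solution and the correlation is $0$; the same occurs for the cross-correlation case $\tau=0$, where $g_0\equiv0$ and $C\neq0$. If instead $\langle\tau\rangle_{n'}\neq0$, property~$\Av$ with shift $\langle\tau\rangle_{n'}$ and value $b=C$ says $g_\tau(r)=C$ holds for at most one residue $r_0\in\{0,\dots,n'-1\}$ (this covers both $C=0$, the autocorrelation case $\langle\tau\rangle_n=0$, and $C\neq0$). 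Since the number of $t$ in a window of length $L$ with $\langle t\rangle_{n'}=r_0$ is at most $\lceil L/n'\rceil$, we get $H_{S^i,S^j}(\tau;j_0\mid L)\leq\lceil L/n'\rceil$ in every case, and taking the maximum over $j_0$, $\tau$, and the pairs $(i,j)$ yields $\cM(\cS;L)\leq\lceil L/n'\rceil$.

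The only genuine bookkeeping — and the step I expect to be the main source of error — is the case analysis on $\tau$ modulo $n$ and modulo $n'$: one must verify that the degenerate situation $g_\tau\equiv0$ always pairs with $C\neq0$ (so those correlations vanish), that $\Av$ is invoked only when its shift $\langle\tau\rangle_{n'}$ is genuinely nonzero, and that the excluded triple $(i,i,0)$ corresponds exactly to $\tau=0$, which \eqref{eq:2} already removes from the autocorrelation maximum. Throughout, the hypothesis $\gcd(n,n')=1$ is what lets me pass freely between $\tau$ and the pair $(\langle\tau\rangle_n,\langle\tau\rangle_{n'})$ and guarantees that a nonzero $\tau$ cannot be simultaneously $\equiv0\pmod n$ and $\equiv0\pmod{n'}$.
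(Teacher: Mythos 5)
Your proof is correct and follows essentially the same route as the paper's: reduce $s^i_t=s^j_{t+\tau}$ to the equation $\phi(u^i_t)-\phi(u^j_{t+\tau})=\psi(v_{t+\tau})-\psi(v_t)$, use $\Au$ to make the left side a constant (nonzero except when $i=j$ and $\langle\tau\rangle_n=0$), use $\Av$ to get at most one solution per period $n'$, and count; your residue-class count of $\lceil L/n'\rceil$ is equivalent to the paper's decomposition of the window into blocks of length $n'$. Your explicit case analysis on $\langle\tau\rangle_{n'}$ (including the degenerate case $g_\tau\equiv 0$ paired with $C\neq 0$) is in fact spelled out more carefully than in the paper's own proof, which compresses it into the remark that $\langle\tau\rangle_{n'}\neq 0$ or $b\neq 0$.
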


\begin{proof} For $0\leq \tau \leq N-1$, $0\leq i,j\leq M-1$ and $0\leq l\leq N-1$ such that $\tau\neq 0$ if $i=j$, by Eq.~\eqref{eq:1} we have
 \begin{align}
  H_{S^i,S^j}(\tau;l\mid L)&=\sum_{t=l}^{L+l-1}h[s^{i}_t,s^j_{t+\tau}]\notag\\
  &=\sum_{t=l}^{L+l-1}h[\phi(u^{i}_t)+\psi(v_t),\phi(u^{j}_{t+\tau})+\psi(v_{t+\tau})]\notag\\
  &=\sum_{t=l}^{L+l-1}h[\phi(u_{t}^{i})-\phi(u_{ t+\tau}^{j}),\psi(v_{t+\tau})-\psi(v_t)]. \label{eq:15}
\end{align}
Note that $u^i_t:=u^i_{\langle t\rangle_{n}}$, $v_t:=v_{\langle t\rangle_{n'}}$ and $s^i_t:=s^i_{\langle t\rangle_{N}}$ for $t\in \Z$ by our convention.

The triple $(i,j,\tau)$ belongs to two disjoint cases, either
$i=j$ and $\langle\tau\rangle_n=0$ or else. In both cases (the first case is trivial and the second follows from  Eq.~\eqref{eq:12})
\[\phi(u_{t}^{i})-\phi(u_{t+\tau}^{j})=b\in \F_q
\quad \text{for all}\quad t\in \Z, \]
with $b=0$ if and only if $i=j$ and $\langle\tau\rangle_n=0$. Note that if $i=j$, then $\tau\neq 0$, so one must have $\langle \tau\rangle_{n'}\neq 0$ or $b\neq 0$.
Then by \eqref{eq:15}, one has
\begin{equation}\label{eq:16}
  \sum_{t=l+n'z}^{l+n'z+n'-1}h[s_{t}^{i},s_{ t+\tau}^{j}]=\sum_{t=0}^{n'-1}h[b,\psi(v_{t+\tau})-\psi(v_t)]\leq 1
\end{equation}
for any  integers $z$ and $l$. Write the correlation window length $L$ as $L=m_{1}n'+m_{2}$ where $0\leq m_{2}<n'$. From \eqref{eq:15} and \eqref{eq:16}, if $m_2=0$, then
\[
 H_{S^i,S^j}(\tau;l\mid L)=\sum_{t=l}^{l+L-1}h[s_{t}^{i},s_{t+\tau}^{j}]
 =\sum_{z=0}^{m_{1}-1}\sum_{t=l+n'z}^{l+n'z+n'-1}h[s_{t}^{i},s_{ t+\tau}^{j}]
  \leq m_{1},
\]
if $m_2>0$, then
\[
 H_{S^i,S^j}(\tau;l\mid L)
 =\sum_{z=0}^{m_{1}-1}\sum_{t=l+n'z}^{l+n'z+n'-1}
 h[s_{t}^{i},s_{t+\tau}^{j}]+
 \sum_{t=l+m_{1}n'}^{l+m_{1}n'+m_{2}-1}h[s_{t}^{i},s_{t+\tau}^{j}]
 \leq m_{1}+1.
\]
In both cases, we have $H_{S^i,S^j}(\tau;l\mid L)\leq \left\lceil\frac{L}{n'}\right\rceil$.
Hence
\[\cM(\cS;L)\leq \left\lceil\frac{L}{n'}\right\rceil,\]
which completes the proof of this theorem.
\end{proof}
\begin{remark} The special case that $\phi(x)=\psi(x)=x$, $(\cU, \phi)$ and $(v,\psi)$ satisfying $\Au$ and $\Av$ was used in \cite{18} to construct optimal FHS sets.
\end{remark}

\subsection{First class of optimal FHS sets.}

\begin{lemma} \label{lemma:au} Suppose $\F_{p^k}$ is a proper subfield of $\F_{q}$. Let $ \{a_0=1, a_1, \cdots, a_{p^{m-k}-1} \}$ be a set of additive coset representatives of $\F_{p^k}$ to $\F_q$. Let $\sigma$ be a bijection from $\{0,1,\cdots, p^k-1\}$ to $\F_{p^k}$ such that $\sigma(0)=0$.
Construct $p^{m-k}$ sequences $\cU=\{ U^i=(u^i_t)_{t=0}^{p^k-1}: 0\leq i\leq p^{m-k}-1\}$ by
\begin{equation}\label{eq:17}
 u^i_t=a_{i}+\sigma(t).
\end{equation}
Then $\cM(\cU,L)=0$ and $\cU$ is an $(p^k, p^{m-k}-1,0;p^m)$-FHS set. Let $P(x)=x^e +c_{e-1} x^{e-1}+\cdots +c_0\in \F_q[x]$ be a polynomial prime to $x^m-1$, and let $\phi_P(x)=x^{p^e}+c_{e-1} x^{p^{e-1}}+\cdots + c_0 x: \F_q\rightarrow \F_q$. Then for any $(i,j,\tau)\in [0,p^{m-k}-1]^2\times [0,p^k-1]-\{(i,i,0)\}$, for any $0\leq t\leq p^k-1$,
\[\phi_P(u^i_{t+\tau})-\phi_P( u^j_t)=\phi_P(a_i-a_j+\sigma(\tau))\in \F_q^*.\]
Hence $(\cU, \phi_P)$ satisfies $\Au$.
\end{lemma}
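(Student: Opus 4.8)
The plan is to split the statement into its two logically independent parts: the zero-correlation claim (that $\cU$ is an FHS set with $\cM(\cU;L)=0$) and the $\Au$-identity for $\phi_P$. The first part I would prove directly from the coset structure, using nothing about $\phi_P$. A Hamming coincidence $u^i_{t+\tau}=u^j_t$ reads $a_i-a_j=\sigma(t)-\sigma(t+\tau)$, whose right-hand side lies in $\F_{p^k}$. If $i\neq j$, the left-hand side lies \emph{outside} $\F_{p^k}$ because $a_i,a_j$ represent distinct additive cosets of $\F_{p^k}$ in $\F_q$; hence no coincidence occurs for any $\tau$, so all partial cross-correlations vanish. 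If $i=j$ and $1\le\tau\le p^k-1$, the equation forces $\sigma(t)=\sigma(t+\tau)$, and injectivity of $\sigma$ gives $\tau\equiv 0$, a contradiction; hence all partial autocorrelations (for $\tau\neq0$) vanish. Therefore $\cM(\cU;L)=0$ for every window length $L$, and counting the $p^{m-k}$ sequences over the alphabet $\F_q$ yields the claimed parameters.

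For the $\Au$-identity the engine is that $\phi_P$ is a $p$-polynomial, hence $\F_p$-linear (additive, commuting with $\F_p$-scalars); I would record this first and then compute, using additivity,
\[
\phi_P(u^i_{t+\tau})-\phi_P(u^j_t)=\phi_P\big(u^i_{t+\tau}-u^j_t\big)=\phi_P\big(a_i-a_j+\sigma(t+\tau)-\sigma(t)\big).
\]
To eliminate the $t$-dependence I need $\sigma(t+\tau)-\sigma(t)=\sigma(\tau)$, i.e. $\sigma$ must be \emph{additive}. This is the point at which the index set $\{0,\dots,p^k-1\}$ has to be read (through $\sigma$) as the additive group $(\F_{p^k},+)$ rather than the cyclic group $\Z/p^k\Z$ — a subtlety worth making explicit, since for $k>1$ no additive bijection $\Z/p^k\Z\to(\F_{p^k},+)$ exists, and the shift convention $u^i_t=u^i_{\langle t\rangle_{p^k}}$ must be interpreted via this additive structure. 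With this reading the displayed quantity collapses to $\phi_P(a_i-a_j+\sigma(\tau))$, independent of $t$, which is the asserted constant $c(i,j,\tau)$.

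It remains to show $c(i,j,\tau)\in\F_q^*$, and this is where $\gcd(P(x),x^m-1)=1$ is used. The argument $a_i-a_j+\sigma(\tau)$ is nonzero on the relevant range: for $i\neq j$ it lies outside $\F_{p^k}$, and for $i=j,\ \tau\neq0$ it equals $\sigma(\tau)\neq0$. I then invoke the permutation criterion for linearized polynomials: the Frobenius $\pi\colon x\mapsto x^p$ has order $m$ on $\F_q$ with minimal polynomial $x^m-1$ (via a normal basis), so $\phi_P=P(\pi)$ is a bijection of $\F_q$ exactly when $P$ is invertible modulo $x^m-1$, i.e. when $\gcd(P(x),x^m-1)=1$; being a bijection fixing $0$, it sends the nonzero argument to a nonzero value, giving $\Au$. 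I expect this bijectivity step to be the main obstacle: the clean identification $\phi_P=P(\pi)$ with $\F_p[\pi]\cong\F_p[x]/(x^m-1)$ is immediate only when the coefficients $c_l$ lie in $\F_p$; for $c_l\in\F_q$ the Frobenius does not commute with the $c_l$, so one must instead work in the twisted polynomial ring $\F_q[\pi;\mathrm{Frob}]/(\pi^m-1)\cong M_m(\F_p)$ and argue there that the coprimality hypothesis still forces $\ker\phi_P=\{0\}$.
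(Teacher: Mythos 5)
Your first part (the coset argument giving $\cM(\cU;L)=0$ for every window length) is correct, and it supplies content the paper's proof omits entirely: the paper's entire proof is the single sentence that $\phi_P$ is additive and bijective and that $a_i-a_j+\sigma(\tau)\neq 0$. Your diagnosis of the second part is also correct, and it exposes a defect of the lemma itself rather than of your proof: the paper silently replaces $\sigma(\langle t+\tau\rangle_{p^k})-\sigma(t)$ by $\sigma(\tau)$, and, as you observe, constancy in $t$ forces $\sigma$ to be a group isomorphism from $\Z/p^k\Z$ onto $(\F_{p^k},+)$, which exists only for $k=1$ (and even then only for the linear bijections $\sigma(t)=c\langle t\rangle_p$, not for an arbitrary bijection fixing $0$). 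So for $k>1$ the lemma is false under the paper's own shift convention no matter which $\sigma$ is chosen; the paper's Example~2, with $k=2$ and $\sigma(a)=\beta^a$, indeed violates $\Au$. Be aware, however, that your repair --- reading the index set additively through $\sigma$ --- proves a different statement, one that no longer feeds into Theorem~\ref{theorem:1}: there the shift on $\cU$ is by construction the reduction mod $p^k$ of a cyclic shift mod $N$, so $\Au$ genuinely refers to cyclic shifts.

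The genuine gap in your plan is the final bijectivity step, and it cannot be closed in the generality in which you (and the lemma) pose it, because the assertion is false for coefficients in $\F_q\setminus\F_p$. Your normal-basis argument ($\F_q$ free of rank one over $\F_p[x]/(x^m-1)$, so $P(\pi)$ is invertible iff $\gcd(P,x^m-1)=1$) is complete and correct when all $c_l\in\F_p$; but the hope that in the twisted ring the commutative coprimality hypothesis ``still forces $\ker\phi_P=\{0\}$'' is refuted by a two-line counterexample. Take $p=3$, $m=2$, $k=1$, $q=9$, $\F_9=\F_3(i)$ with $i^2=-1$, and $P(x)=x+i\in\F_9[x]$. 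Then $\gcd(P(x),x^2-1)=1$, since the root $-i$ of $P$ satisfies $(-i)^2-1=-2\neq 0$, and yet
\[
\phi_P(x)=x^3+ix=x\bigl(x^2+i\bigr)
\]
has nontrivial kernel, because $-i$ lies in the group of squares of $\F_9^*$ (the order-$4$ subgroup $\{\pm 1,\pm i\}$). Since the values $a_i-a_j+\sigma(\tau)$ over admissible triples sweep out all of $\F_q^*$, the conclusion $\phi_P(a_i-a_j+\sigma(\tau))\in\F_q^*$ genuinely fails here. So the lemma, and your proof, can only be saved by strengthening the hypothesis to $P(x)\in\F_p[x]$ --- which is what the paper's Example~2 actually uses, and for which your prime-field argument already suffices --- or by replacing $\gcd(P,x^m-1)=1$ with coprimality to $\pi^m-1$ in the Ore/twisted sense. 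Flagging the twisted ring as ``the main obstacle'' was the right instinct, but the obstacle is fatal to the stated hypothesis, not merely technical.
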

\begin{proof} This is because $\phi_P$ is additive and defines a bijection from $\F_q$ to itself, and $a_i-a_j+\sigma(\tau)\neq 0$ if $(i,j,\tau)\neq (i,i,0)$.
\end{proof}

\begin{lemma} \label{lemma:av} Let $\alpha$ be a primitive root of $\F_q$ and $v=(\alpha^t)_{t=0}^{q-2}$. For $\gcd(d,q-1)=1$, set $\psi_d(x)=x^d$, then  $(v, \psi_d)$ satisfies $\Av$.
\end{lemma}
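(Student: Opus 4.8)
The plan is to reduce the whole statement to a single clean factorization. First I would observe that since $\psi_d(x)=x^d$, the transformed entries are $\psi_d(v_t)=(\alpha^t)^d=(\alpha^d)^t$. Setting $\beta:=\alpha^d$, the coprimality hypothesis $\gcd(d,q-1)=1$ guarantees that $\beta$ still has multiplicative order $q-1$, i.e.\ $\beta$ is again a primitive root of $\F_q$. In particular the map $t\mapsto \beta^t$ is a bijection from $\{0,1,\ldots,q-2\}$ onto $\F_q^*$, a fact I will use to count solutions. Since $n'=q-1$ here and indices are read modulo $q-1$ by the paper's convention, I may also drop the residue reduction in the exponent because $\beta^{q-1}=1$.

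Next, for a fixed shift $\tau$ with $1\leq \tau\leq q-2$ and a fixed target $b\in\F_q$, I would rewrite the defining difference as
\[
\psi_d(v_{t+\tau})-\psi_d(v_t)=\beta^{t+\tau}-\beta^{t}=\beta^{t}(\beta^{\tau}-1).
\]
Because $\beta$ is a primitive root and $1\leq \tau\leq q-2$, the factor $c_\tau:=\beta^{\tau}-1$ is nonzero. Hence the equation $\psi_d(v_{t+\tau})-\psi_d(v_t)=b$ is equivalent to $\beta^{t}=b/c_\tau$.

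Now I would simply count the solutions $t\in\{0,\ldots,q-2\}$. If $b=0$ the right-hand side is $0$, which lies outside the image $\F_q^*$ of $t\mapsto\beta^t$, so there is no solution; if $b\neq 0$ the right-hand side is a nonzero element, so by bijectivity there is exactly one solution. In either case the number of valid $t$ is at most $1$ (and is exactly $1$ for each nonzero $b$), which is precisely the condition $\Av$ for $(v,\psi_d)$; this matches the estimate $\sum_{t=0}^{n'-1}h[b,\psi_d(v_{t+\tau})-\psi_d(v_t)]\leq 1$ needed in Eq.~\eqref{eq:16}.

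I do not anticipate a genuine obstacle here: the entire argument rests on the single observation that $\alpha^d$ is again primitive (so that the exponential map is injective) together with the elementary factorization $\beta^{t+\tau}-\beta^{t}=\beta^{t}(\beta^{\tau}-1)$. The only point requiring minor care is the bookkeeping of the cyclic index convention $v_t=v_{\langle t\rangle_{n'}}$ and the verification that $\beta^{\tau}\neq 1$ exactly on the range $1\leq\tau\leq q-2$, both of which are immediate from $\beta$ having order $q-1$.
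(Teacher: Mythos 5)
Your proof is correct and fills in exactly the verification the paper omits --- its own proof of this lemma is literally just ``Easy to check.'' The argument you give (pass to $\beta=\alpha^{d}$, which is again primitive since $\gcd(d,q-1)=1$, factor the difference as $\beta^{t}(\beta^{\tau}-1)$ with $\beta^{\tau}-1\neq 0$ for $1\leq\tau\leq q-2$, and count at most one solution of $\beta^{t}=b/(\beta^{\tau}-1)$ by injectivity of $t\mapsto\beta^{t}$ on $\{0,\dots,q-2\}$) is the standard and clearly intended one, and it matches precisely the form in which $\Av$ is invoked in Eq.~\eqref{eq:16}.
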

\begin{proof} Easy to check.
\end{proof}

\begin{theorem} \label{theorem:2} Let $(\cU,\phi_P)$ and $(v,\psi_d)$ be given by Lemmas~\ref{lemma:au} and \ref{lemma:av} respectively. Then  the FHS set $\cS$ constructed from $(\cU,\phi_P)$ and $(v,\psi_d)$ in Theorem~\ref{theorem:3} is a $(p^k(p^{m}-1), p^{m-k}; p^m)$-FHS set such that for any correlation window length $1\leq L\leq p^k(p^m-1)$,
 \[ \cM(\cS;L)= \left\lceil\frac{L}{p^m-1}\right\rceil. \]
Moreover, $\cS$ is a strictly optimal FHS set with optimal partial Hamming correlation with respect to the bound in \eqref{eq:7} and with optimal family size with respect to the bound in \eqref{eq:8}.
\end{theorem}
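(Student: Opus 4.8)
The plan is to first determine $\cM(\cS;L)$ exactly and then substitute the parameters $(N,M,d')=(p^k(p^m-1),\,p^{m-k},\,p^m)$ into the two families of bounds. Write $n=p^k$, $n'=q-1=p^m-1$, so $N=nn'$ with $\gcd(n,n')=1$. Since $(\cU,\phi_P)$ satisfies $\Au$ (Lemma~\ref{lemma:au}) and $(v,\psi_d)$ satisfies $\Av$ (Lemma~\ref{lemma:av}), Theorem~\ref{theorem:1} applies and already yields the upper bound $\cM(\cS;L)\le\lceil L/n'\rceil$, together with the fact that $\cS$ is a $(p^k(p^m-1),p^{m-k};p^m)$-FHS set. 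So the first genuine task is a matching lower bound.

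For the lower bound I would exhibit a single autocorrelation instance. Using the congruence splitting $t\leftrightarrow(\langle t\rangle_{n},\langle t\rangle_{n'})$ (available because $\gcd(n,n')=1$), I would, by CRT, pick a shift $\tau$ with $\langle\tau\rangle_{n}\neq0$ and $\langle\tau\rangle_{n'}\neq0$, and examine $S^0$. By $\Au$ the value $b:=\phi_P(u^0_t)-\phi_P(u^0_{t+\tau})$ is a constant in $\F_q^{*}$, and a match $s^0_t=s^0_{t+\tau}$ occurs exactly when $\psi_d(v_{t+\tau})-\psi_d(v_t)=b$. With $\beta=\alpha^{d}$ (primitive, since $\gcd(d,q-1)=1$) this reads $\beta^{\langle t\rangle_{n'}}(\beta^{\langle\tau\rangle_{n'}}-1)=b$; as $\beta^{\langle\tau\rangle_{n'}}\neq1$ and $b\neq0$, there is a unique residue $t_0\bmod n'$ solving it. Hence the matches are precisely the positions $t\equiv t_0\pmod{n'}$, equally spaced with gap $n'$, so starting the window at such a position makes it contain exactly $\lceil L/n'\rceil$ matches. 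This gives $\cM(\cS;L)\ge\lceil L/n'\rceil$, and therefore $\cM(\cS;L)=\lceil L/(p^m-1)\rceil$.

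Next I would verify strict optimality for the partial Hamming correlation by substituting into the bound \eqref{eq:6} and showing it already equals $\lceil L/(p^m-1)\rceil$ (so that, by Definition~\ref{definition:1}, one of \eqref{eq:6}--\eqref{eq:7} is met for every $L$). Here $MN=q(q-1)$, $NM-d'=q(q-2)$, and $MN-1=q^2-q-1$, so the inner ceiling of \eqref{eq:6} is $\big\lceil (q-2)p^k(q-1)/(q^2-q-1)\big\rceil$. The crux is to show this inner ceiling equals exactly $p^k$, i.e. $p^k-1<\frac{(q-2)p^k(q-1)}{q^2-q-1}\le p^k$. The upper inequality is immediate, and the lower one reduces to $p^k(2q-3)<q^2-q-1$. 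This is where the hypotheses enter decisively: since $\F_{p^k}$ is a \emph{proper} subfield and $k\mid m$, one has $k\le m/2$, hence $p^k\le\sqrt q$, and then $\sqrt q\,(2q-3)<q^2-q-1$ holds for all admissible $q\ge9$. With the inner ceiling equal to $p^k$, the outer expression collapses to $\big\lceil (L/N)p^k\big\rceil=\lceil L/(q-1)\rceil=\cM(\cS;L)$, as required.

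Finally, for optimality of the family size I would use the bound \eqref{eq:9}. Since $\cM(\cS;L)=\lceil L/(q-1)\rceil$ is nondecreasing in $L$ and $\lfloor d'^{\,\cM+1}/N\rfloor$ is nondecreasing in $\cM$, the minimum over $2\le L\le N$ with $L>\cM$ is attained at the smallest admissible value $\cM=1$, i.e. at any $2\le L\le q-1$. There it equals $\big\lfloor q^{2}/(p^k(q-1))\big\rfloor=\big\lfloor p^{m-k}+p^{m-k}/(p^m-1)\big\rfloor$, and since $0<p^{m-k}/(p^m-1)<1$ this floor is exactly $p^{m-k}=M$. Thus the bound is met with equality, giving strict optimality for the family size (again as in Definition~\ref{definition:1}, one of \eqref{eq:8}--\eqref{eq:9} is achieved). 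I expect the main obstacle to be the $\lceil\cdot\rceil$-collapse in the correlation bound, specifically the inequality $p^k(2q-3)<q^2-q-1$, since this is exactly the point where $k\le m/2$ is used and where the improvement over the $k=1$ case of \cite{18} lives; by contrast the lower-bound construction and the family-size computation are comparatively routine once the match positions are identified.
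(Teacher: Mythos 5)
Your proposal is correct, and its skeleton is the same as the paper's: Theorem~\ref{theorem:1} gives the upper bound $\cM(\cS;L)\le\lceil L/(p^m-1)\rceil$, the Niu et al.\ correlation bound with inner ceiling equal to $p^k$ gives the matching lower bound together with strict optimality, and a family-size bound evaluated at a window length where $\cM(\cS;L)=1$ gives $M\le p^{m-k}$. The differences are all in your favor. First, you prove the lower bound a second time, constructively: for a shift $\tau$ with $\langle\tau\rangle_{p^k}\neq 0$ and $\langle\tau\rangle_{p^m-1}\neq 0$, the autocorrelation matches of $S^0$ fall exactly on one residue class modulo $p^m-1$, so a window starting at such a position contains exactly $\lceil L/(p^m-1)\rceil$ hits; the paper obtains the lower bound only from the universal bound, so this step of yours is redundant for the theorem as stated, but it makes the exact value of $\cM(\cS;L)$ independent of the cited result of Niu et al. Second, you actually verify the ceiling collapse $\lceil (q-2)(q-1)p^k/(q^2-q-1)\rceil=p^k$ by reducing it to $p^k(2q-3)<q^2-q-1$, which the paper asserts with no justification (a remark: properness of the subfield alone suffices, since $p^k\le q/p$ already yields the inequality; the stronger $p^k\le\sqrt{q}$ you invoke follows anyway from $k\mid m$ and $k<m$, so nothing is lost). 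Third, your bookkeeping of which bounds are used is more accurate than the paper's own: the paper's proof cites \eqref{eq:7} and \eqref{eq:8}, but the formulas it actually manipulates are precisely those of \eqref{eq:6} and \eqref{eq:9} --- indeed, at $L=p^m-1$ one has $L-d'\cM(\cS;L)=-1<0$, so the bound \eqref{eq:8} is not even applicable there. Since Definition~\ref{definition:1} requires only one bound from each pair to be attained, your appeal to \eqref{eq:6} and \eqref{eq:9} establishes exactly the strict-optimality statement that the paper's proof actually proves; the phrase ``with respect to the bound in \eqref{eq:7}/\eqref{eq:8}'' in the theorem is a mislabel in the source itself.
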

\begin{proof}  By Theorem~\ref{theorem:3}, we know $\cS$ is a $(p^k(p^{m}-1), p^{m-k}; p^m)$-FHS set and $\cM(\cS;L)\leq  \left\lceil\frac{L}{p^m-1}\right\rceil$. We are left to check the equality and the optimality of the partial Hamming correlation and family size with respect to the bounds in Eq.\eqref{eq:7} and \eqref{eq:8}.

For $(N,M,d')=(p^k(p^m-1), p^{m-k}, p^m)$,
\begin{align*}
\left\lceil\frac{(NM-d')N}{(NM-1)d'}\right\rceil
&=\left\lceil\frac{(p^k(p^m-1)p^{m-k}-p^m)p^k(p^m-1)}{(p^k(p^m-1)p^{m-k}-1)p^m}\right\rceil\\
&=\left\lceil\frac{(p^m-2)p^k(p^m-1)}{p^m(p^m-1)-1}\right\rceil=p^k.
\end{align*}
Then Niu et al.'s bound in Eq.\eqref{eq:7} is that for any correlation window length $1\leq L\leq p^k(p^m-1)$, for any $M$ FHS set $\cF$ of length $N$ and alphabet size $d'$,
\[
\cM(\cF;L)\geq\left\lceil \frac{L}{N}\left\lceil\frac{(NM-d')N}{(NM-1)d'}\right\rceil\right\rceil
=\left\lceil\frac{L}{p^m-1}\right\rceil.
\]
Hence $\cM(\cS;L)= \left\lceil\frac{L}{p^m-1}\right\rceil$ and $\cS$ has optimal partial Hamming correlation with respect to  the bound in \eqref{eq:7}.

Take $L=p^m-1$, then $\cM(\cS;L)=1$. The bound \eqref{eq:8} gives
\[M\leq \left\lfloor\frac{p^{2m}}{p^k(p^m-1)}\right\rfloor=\left\lfloor\frac{p^{m-k}p^k(p^m-1)+p^m}{p^k(p^m-1)}\right\rfloor=p^{m-k}.\]
Note that the actual family size of $\cS$ is exactly $p^{m-k}$, hence $\cS$ has an optimal family size with respect to the bound in \eqref{eq:8}.
\end{proof}

\begin{example} Let $p=3$, $m=2$, $d=3$, $k=1$ and $\alpha$ be a primitive element of $\F_{3^2}$ over $\F_3$ satisfying $\alpha^2+\alpha+2=0$. Take $a_i\in\{0,\alpha,2\alpha\}$ with $0\leq i\leq 2$. Suppose $\cU=\{(0,1,2);(\alpha,\alpha+1,\alpha+2);(2\alpha,2\alpha+1,2\alpha+2)\}$ and $v=(\alpha^{t})_{t=0}^{7}$. Set
\[\phi_P(x)=x,\quad \psi(x)=x^3,\]
and $\sigma$ is identity mapping over $\F_3$. Then the sequence set $\cS$ in Theorem~\ref{theorem:4} consists of the following three FHSs of the length 24:
\[ \begin{split}
 S^0=& (1,\alpha^5,\alpha^7,\alpha,0,\alpha,\alpha^2,\alpha^2,0,\alpha^3,\alpha,\alpha^6,\alpha^4,\\
& \alpha^6,\alpha^5,\alpha^5,\alpha^4,\alpha^2,\alpha^6,\alpha^7,1,\alpha^7,\alpha^3,\alpha^3);\\
S^1=&(\alpha^7,0,\alpha^2,\alpha^5,\alpha,\alpha^5,1,1,\alpha,\alpha^4,\alpha^5,\alpha^3,\\
&\alpha^6,\alpha^3,0,0,\alpha^6,1,\alpha^3,\alpha^2,\alpha^7,\alpha^2,\alpha^4,\alpha^4);\\
 S^2=& (\alpha^2,\alpha,1,0,\alpha^5,0,\alpha^7,\alpha^7,\alpha^5,\alpha^6,0,2,\\
&\alpha^3,2,\alpha,\alpha,\alpha^3,\alpha^7,2,1,\alpha^2,1,\alpha^6,\alpha^6).\\
 \end{split} \]
By computer experiments,
 \begin{equation*}
  \cM(\cS;L)=\left\lceil\frac{L}{8}\right\rceil= \begin{cases}
 {1},\ & \textrm{for}\ 1\leq L\leq 8,\\
 {2},\ & \textrm{for}\ 9\leq L\leq 16, \\
 {3},\ & \textrm{for}\ 17\leq L\leq 24. \\
 \end{cases}
\end{equation*}
$\cS$ is strictly optimal with respect to the bound in \eqref{eq:7}, and also has an optimal family size with respect to the bound in \eqref{eq:8}.
\end{example}
\begin{example} Let $p=3$, $m=4$, $d=5$, $k=2$, and $\alpha$ be a root of a primitive polynomial $x^4+x+2$ over $\F_{3}$. Set $\beta=\alpha^{10}\in\F_{3^2}\setminus\F_{3}$, and
\[P(x)=x^2+x-1,\quad \phi_P(x)=x^{9}+x^3-x,\quad \psi(x)=x^7.\]
One know that $\{\alpha,\alpha\beta\}$ can be viewed as a complementary basis of $\F_{3^2}$, and take $a_i\in\{c_0\alpha+c_1\alpha^{11}:c_0,c_1\in \F_3\}$. Suppose $\sigma(a)=\beta^a$ satisfying $\sigma(0)=0$ for $a\in\{1,2,\ldots,8\}$. Then \[\cU=\{(0,\beta,\ldots,\beta^{7},1);(\alpha,\alpha+\beta,\ldots,\alpha+1);\cdots\},\quad v=(\alpha^{t})_{t=0}^{79},\]
and the sequence set $\cS$ in Theorem~\ref{theorem:4} consists of 9 FHSs of the length 720, for $(a_0,a_1,a_2)=(0,\alpha,\alpha^{11})$, three of them are listed below:
\[ \begin{split}
 S^0=& (1,\alpha^{32},\alpha^{65},\alpha^{16},\alpha^{27},\alpha^{64},\alpha^{35},\alpha^{53},\alpha^{33},\alpha^{48},\\
&\alpha^{50},\alpha^{41},\alpha,\alpha^{59},\alpha^{42},\alpha^{32},\alpha^{46},\alpha^{66},\alpha^{25},\alpha^{49},\cdots);\\
S^1=&(\alpha^{51},\alpha^{11},\alpha^{20},\alpha^{66},\alpha,\alpha^{22},\alpha^{41},\alpha^{69},\alpha^{77},\alpha^{70},\\
&\alpha^{19},\alpha^{67},\alpha^{75},\alpha^{10},\alpha^{15},\alpha^{11},\alpha^6,\alpha^{76},\alpha^{14},\alpha^{58},\cdots);\\
 S^2=&(\alpha^3,\alpha^{17},\alpha^{44},\alpha^{63},\alpha^{70},\alpha^{62},\alpha^{76},\alpha^7,\alpha^4,\alpha^{36},\\
&\alpha^{31},\alpha^{16},\alpha^{23},\alpha^9,\alpha^{38},\alpha^{17},\alpha^{64},\alpha^{30},\alpha^{33},\alpha^{19},\cdots).\\
 \end{split} \]
By computer experiments, it can check that $\cS$ is strictly optimal with respect to the bound in \eqref{eq:7}, and also has an optimal family size with respect to the bound in \eqref{eq:8}.
\end{example}

\section{Construction of Optimal FHS Sets via the trace map}
Let $q=p^m$ and $\alpha$ be a primitive root of $\F_q^*$. Let $T=\frac{q-1}{p-1}$. For a nonzero vector $ {\w}=(a_0,a_1,\ldots,a_{k-1})\in \F_{q}^k$, let $V_{\w}=\langle a_0,\cdots, a_{k-1}\rangle_{\F_p}$ be the $\F_p$-subspace of $\F_q$ generated by $\w$ and let $R(\w)=\dim_{\F_p} V_{\w}$.
Define the map $\cT=\cT_{\w}: \F_q\rightarrow \F_p^k$  by
 \begin{equation} \cT(x)=(\Tr_{q/p}(a_{0}x),\Tr_{q/p}(a_{1}x),\ldots,
\Tr_{q/p}(a_{k-1}x)).
 \end{equation}

\begin{lemma} \label{lemma:tp} The map $\cT$
is an $\F_p$-linear map whose kernel $\ker (\cT)=V_{\w}^\bot$ where $V_{\w}^\bot$ is the orthogonal complementary  of $V_{\w}$ via the nondegenerate bilinear  map $\Tr_{q/p}$. In particular,  $\dim_{\F_p}\ker(\cT)= m-R(\w)$ and $\dim_{\F_p} \Im(\cT)=R(\w)$.
\end{lemma}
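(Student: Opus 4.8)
The plan is to verify the four assertions in turn, all of which reduce to standard linear algebra over $\F_p$ once the coordinates of $\cT$ are recognized as the pairings of $x$ against a spanning set of $V_{\w}$. First I would note that $\cT$ is $\F_p$-linear: each coordinate $x\mapsto \Tr_{q/p}(a_ix)$ is the composition of the $\F_p$-linear multiplication-by-$a_i$ map on $\F_q$ with the $\F_p$-linear functional $\Tr_{q/p}$, hence is itself $\F_p$-linear, and therefore so is $\cT$.

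Next I would identify the kernel. By definition $x\in\ker(\cT)$ if and only if $\Tr_{q/p}(a_ix)=0$ for every $i\in\{0,\dots,k-1\}$. Since the map $(a,x)\mapsto\Tr_{q/p}(ax)$ is $\F_p$-bilinear, simultaneous vanishing on the generators $a_0,\dots,a_{k-1}$ is equivalent to vanishing on their entire $\F_p$-span $V_{\w}$; that is, $\Tr_{q/p}(ax)=0$ for all $a\in V_{\w}$, which is exactly the defining condition for $x\in V_{\w}^\bot$. Hence $\ker(\cT)=V_{\w}^\bot$. This passage from ``vanishing on the generators'' to ``vanishing on all of $V_{\w}$'' is the one step where $\F_p$-linearity of the pairing is genuinely used.

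For the dimension of the kernel, I would invoke the nondegeneracy of the symmetric $\F_p$-bilinear trace form $B(x,y)=\Tr_{q/p}(xy)$ on $\F_q$ --- this is the only non-formal input, and it follows from the separability of the extension $\F_q/\F_p$ (equivalently, from the fact that $\Tr_{q/p}$ is not identically zero). For a nondegenerate form one has $\dim_{\F_p}W+\dim_{\F_p}W^\bot=m$ for every $\F_p$-subspace $W\subseteq\F_q$; applying this with $W=V_{\w}$ gives $\dim_{\F_p}\ker(\cT)=\dim_{\F_p}V_{\w}^\bot=m-R(\w)$. Finally the rank--nullity theorem yields $\dim_{\F_p}\Im(\cT)=m-\dim_{\F_p}\ker(\cT)=R(\w)$.

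I do not expect any genuine obstacle: the whole content is the nondegeneracy of the trace form, after which the orthogonal-complement dimension formula and rank--nullity close the argument. The only subtlety worth flagging is that the identity $\dim_{\F_p}W+\dim_{\F_p}W^\bot=m$ holds for nondegenerate forms \emph{even though} over a finite field the intersection $W\cap W^\bot$ may be nonzero (the form can be isotropic on $W$), so no separate treatment of isotropic subspaces is required.
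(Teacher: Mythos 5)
Your proof is correct and takes essentially the same approach as the paper: identify $\ker(\cT)$ as $V_{\w}^\bot$ by observing that vanishing of $\Tr_{q/p}(a_ix)$ on the generators $a_i$ is equivalent to vanishing on all of $V_{\w}$, then deduce the dimensions from nondegeneracy of the trace form and rank--nullity. The paper's own proof records only the kernel identification and leaves the linearity and dimension counts implicit, so your write-up is just a more detailed rendering of the same argument.
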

\begin{proof} An element $x\in \ker (\cT)$ if and only if $\Tr_{q/p}(a_i x)=0$, i.e., $x\bot a_i$ for all $i$, in other words, $x\in V_{\w}^{\bot}$.
\end{proof}
 We construct optimal FHS sequences based on $\w$ with $R(\w)=m-1$ or $m$.

\subsection{Second class of optimal FHS Sets}

\begin{theorem} \label{theorem:3} Fix $\w =(a_0,\cdots, a_{k-1})\in \F_q^k$ such that $R(\w) =m-1$. Let $r\mid p-1$ such that $\gcd(r,m)=1$ and $n'=\frac{q-1}{r}$. Let $v=(\alpha^{rt})_{0\leq t\leq n'-1}$.  Let $d\in \Z$ such that $\gcd(d,q-1)=1$ and $\psi$ be an $\F_p$-linear automorphism of $\F_q$. Define the sequence set
 \[\cS=\cS(\w;r,d, \psi)=\{S^i: 0\leq i\leq r-1\} \]
where
 \begin{equation} \label{eq:19} S^i=(s^i_t)_{0\leq t\leq n'-1}\ \text{and}\
 s^i_t=\cT\circ \psi(\alpha^{d(i+rt)}).\end{equation}
Then $\cS$ is an $(n',r; p^{m-1})$-FHS set and for each correlation window length $1\leq L\leq n'$,
\[\cM(\cS;L)=\left\lceil\frac{L}{T}\right\rceil.\]
Moreover, $\cS$ is a strictly optimal FHS set with optimal partial Hamming correlation with respect to the bound in \eqref{eq:7} and with optimal family size with respect to the bound in \eqref{eq:8}.
\end{theorem}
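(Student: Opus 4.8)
The plan is to follow the pattern of the proof of Theorem~\ref{theorem:1}: reduce each Hamming coincidence $s^i_t=s^j_{t+\tau}$ to a membership condition in a fixed $1$-dimensional line, show that the set of coincidence positions inside one period is a \emph{cyclic arithmetic progression of common difference exactly $T$}, and then count how many terms of such a progression can lie in a window of length $L$. First I would settle the alphabet size: since $\psi$ is an $\F_p$-linear automorphism and $\dim_{\F_p}\Im(\cT)=R(\w)=m-1$ by Lemma~\ref{lemma:tp}, the composite $\cT\circ\psi$ has image of size $p^{m-1}$, and since $\gcd(d,q-1)=1$ the exponents $d(i+rt)$ run over all residues mod $q-1$, so $\alpha^{d(i+rt)}$ runs over $\F_q^*$; as $\ker(\cT\circ\psi)$ is a line (size $p>1$) every image value is attained, whence $\cS$ is an $(n',r;p^{m-1})$-FHS set.

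Next, using $\F_p$-linearity of $\cT\circ\psi$, I would write
\[
 s^i_t-s^j_{t+\tau}=(\cT\circ\psi)\bigl(\alpha^{d(i+rt)}-\alpha^{d(j+r(t+\tau))}\bigr)=(\cT\circ\psi)\bigl(\alpha^{drt}c\bigr),\qquad c:=\alpha^{di}-\alpha^{d(j+r\tau)},
\]
so that $s^i_t=s^j_{t+\tau}$ holds iff $\alpha^{drt}c\in W:=\psi^{-1}(\ker\cT)$, a line of size $p$. A short check shows $c=0$ only for the excluded triple: $c=0$ forces $i\equiv j+r\tau\pmod{q-1}$, and since $q-1=rn'$ and $0\le i,j\le r-1$ this gives $i=j$ and $\langle\tau\rangle_{n'}=0$. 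Hence for all relevant $(i,j,\tau)$ we have $c\neq 0$, and the condition becomes $\alpha^{drt}c\in W^{*}=W\setminus\{0\}$.

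The crux — and the step I expect to be the main obstacle — is to prove that $\{t:\alpha^{drt}c\in W^{*}\}$ is a cyclic arithmetic progression with common difference $T$ (or empty). The two number-theoretic facts I would establish first are $\gcd(r,T)=\gcd(r,m)=1$ (from $T\equiv m\pmod{p-1}$) and $T\mid n'$ with $n'/T=(p-1)/r$. Since $\F_p^{*}=\langle\alpha^{T}\rangle$ and $\gcd(d,q-1)=1$, the map $t\mapsto\alpha^{drt}$ is a bijection from $\Z/n'$ onto $G:=\langle\alpha^{r}\rangle$, and $W^{*}$ is a single coset of $\F_p^{*}$; thus the solutions are exactly those $t$ for which $\alpha^{drt}$ lies in one coset of $G\cap\F_p^{*}=\langle\alpha^{rT}\rangle$, a subgroup of $G$ of order $n'/T$. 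Pulling this subgroup back through the bijection (using $\gcd(d,n')=1$, $\gcd(r,T)=1$, and $T\mid n'$) turns the exponent condition into $t\equiv t_0\pmod{T}$, i.e. a progression of difference $T$ with $n'/T$ terms. This identification of the coincidence set is the whole difficulty; the rest is bookkeeping.

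Finally, because the coincidence positions are $T$-spaced on the cycle $\Z/n'$ with $T\mid n'$, any window of $L$ consecutive positions meets them in at most $\lceil L/T\rceil$ places, giving $H_{S^i,S^j}(\tau;l\mid L)\le\lceil L/T\rceil$ and hence $\cM(\cS;L)\le\lceil L/T\rceil$. For the reverse inequality I would invoke the Niu et al. bound: with $(N,M,d')=(n',r,p^{m-1})$ one has $NM=q-1$, and a direct estimate shows the fraction $\frac{(NM-d')N}{(NM-1)d'}$ lies in $\bigl(\frac{p-1}{r}-1,\frac{p-1}{r}\bigr)$ with $\frac{p-1}{r}\in\Z$, so $\bigl\lceil\frac{(NM-d')N}{(NM-1)d'}\bigr\rceil=\frac{p-1}{r}=\frac{n'}{T}$; then the bound \eqref{eq:7} reads $\cM(\cS;L)\ge\bigl\lceil\frac{L}{N}\cdot\frac{n'}{T}\bigr\rceil=\lceil L/T\rceil$, forcing equality and partial-correlation optimality. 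For family size I would specialize \eqref{eq:8} at $L=T$ (where $\cM(\cS;T)=1$ and $T>d'$): using $T-1=p\cdot\frac{p^{m-1}-1}{p-1}$ and $T-p^{m-1}=\frac{p^{m-1}-1}{p-1}$ gives $\frac{(L-\cM)d'}{L-d'\cM}=p^{m}=q$, and then $\bigl\lfloor\frac{1}{n'}\lfloor q\rfloor\bigr\rfloor=\bigl\lfloor\frac{qr}{q-1}\bigr\rfloor=r=M$, so the bound is attained.
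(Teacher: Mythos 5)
Your proposal is correct and follows essentially the same route as the paper's proof: $\F_p$-linearity of $\cT\circ\psi$ reduces each coincidence $s^i_t=s^j_{t+\tau}$ to membership of $\alpha^{drt}c$ in a one-dimensional kernel line, the coincidence positions then form a $T$-spaced progression because $\gcd(r,T)=\gcd(r,m)=1$ and $\F_p^*=\langle\alpha^T\rangle$, and optimality follows from the Niu et al.\ bound together with bound \eqref{eq:8} evaluated at a single window length (you take $L=T$, the paper takes $L=n'$; both give the inner quotient $q$ and the family-size bound $r$). One small slip: the lower-bound formula you actually evaluate, $\left\lceil\frac{L}{N}\left\lceil\frac{(NM-d')N}{(NM-1)d'}\right\rceil\right\rceil$, is bound \eqref{eq:6}, not \eqref{eq:7} which you cite (the paper computes \eqref{eq:7} directly, with $I=p-1$); since both bounds evaluate to $\left\lceil L/T\right\rceil$ here, your conclusions $\cM(\cS;L)=\left\lceil L/T\right\rceil$ and strict optimality (Definition~\ref{definition:1} requires achieving only one of the two bounds) still stand, though the theorem's literal attribution to \eqref{eq:7} would need the paper's computation.
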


\begin{proof} As $\gcd(d,q-1)=1$, $\alpha^d$ is still a primitive root of $\F_q^*$. Replacing $\alpha$ by $\alpha^d$, we may assume $d=1$.

The alphabet set $\{s^i_t:\ 0\leq i\leq r-1,\ 0\leq t\leq n'-1\}$ of $\cS$ is nothing but $\Im(\cT)$, hence it is of order $p^{m-1}$ by Lemma~\ref{lemma:tp}.

We are left to compute $\cM(\cS;L)$. Note that $\ker(\cT\circ\psi)=\psi^{-1}\ker(\cT)$ is $1$-dimensional $\F_p$-vector space, pick $0\neq a\in \ker(\cT\circ \psi)$, then $\ker(\cT\circ\psi)=\F_p a$. Fix $i$, $j$ and $\tau$ such that if $i=j$, then $\tau\not\equiv 0\bmod{n'}$. The value $H_{S^i, S^j}(\tau;l\mid T)$ is nothing but  the number of $t\in [l, l+L-1]$ such that $s^i_t=s^j_{t+\tau}$, in other words, $\alpha^{i+rt}-\alpha^{j+r(t+\tau)}=\alpha^{rt} (\alpha^i-\alpha^{j+r\tau})\in \F_p a$. Note that $i-j-r\tau$ is not a multiple of $q-1$, hence $\alpha^i-\alpha^{j+r\tau}\neq 0$. Let $b=(\alpha^i-\alpha^{j+r\tau})^{-1} a$, then
 \[ s^i_t=s^j_{t+\tau}\ \Longleftrightarrow\ \alpha^{rt}\in \F^*_p b. \]
Suppose we have $\alpha^{r t_0}\in \F^*_p b$ for some $t_0$ (otherwise $H_{S^i, S^j}(\tau; l\mid L)=0$). Then
 \[  s^i_t=s^j_{t+\tau}\ \Longleftrightarrow\ \alpha^{r(t-t_0)}\in \F^*_p. \]
Note that $\F^*_p=\{\alpha^{cT}: 0\leq c\leq p-2\}$ and $\gcd(r, T)=\gcd (r,m)=1$, then
  \[  s^i_t=s^j_{t+\tau}\ \Longleftrightarrow\ t- t_0\in T\Z. \]
This means that the number of $t\in [l, l+L-1]$ such that $s^i_t=s^j_{t+\tau}$ is at most $\left\lceil\frac{L}{T}\right\rceil$, i.e., $H_{S^i, S^j}(\tau; l\mid L)\leq \left\lceil\frac{L}{T}\right\rceil$. Hence $\cM(\cS,L)\leq  \left\lceil\frac{L}{T}\right\rceil$.

 On the other hand, the bound Eq.~\eqref{eq:7} gives
\begin{align*}
\cM(\cS;L)&\geq\left\lceil\frac{L}{N}\cdot\frac{[2INM-(I+1)Id']}{(NM-1)M}\right\rceil\\
&=\left\lceil\frac{Lr}{q-1}\cdot\frac{(p-1)(q-2)}{(q-2)r}\right\rceil
=\left\lceil\frac{L}{T}\right\rceil.
\end{align*}
Thus $\cM(\cS,L)= \left\lceil\frac{L}{T}\right\rceil$ and $\cS$ is optimal with respect to the bound in \eqref{eq:7}.

Set $L=\frac{q-1}{r}$, then $\cM(\cS;L)=\frac{p-1}{r}$ and $d'\mathcal{M(S;}L)<L$. According to the bound in \eqref{eq:8}, we have
\[M\leq \left\lfloor\frac{1}{N}\left\lfloor \frac{(L-\cM(\cS;L))d'}{L-d'\cM(\cS;L)}\right\rfloor\right\rfloor=\left\lfloor \frac{rq}{q-1}\right\rfloor=r.\]
Thus the side of $\cS$ is optimal  with respect to the bound in \eqref{eq:8}.
This completes the proof.
\end{proof}
\begin{remark} The special case that $d=1$ and $\psi$ is identity mapping of $\F_q$ was used in \cite{16} to construct optimal FHS sets.
\end{remark}

\begin{example}
Let $p=5$, $m=d=3$, $r=2$ and $(a_0,a_1)=(1,\alpha)$, where $\alpha$ is a primitive element of $F_{5^3}$ over $\F_5$ generated by $\alpha^3+\alpha+1=0$. Suppose $\psi$ is Frobenius automorphism of $\F_{5^3}$.
Then the set $\cS$ of ~\eqref{eq:19} consists of the following two FHSs:
 \[ \begin{split}
S^0&=\left((3,0),(1,3),(1,2),(4,2),(0,0),(1,2),(0,4),(0,4),(1,1),(1,0),(1,4),(2,0),\right.\\
&\left.(3,0),(4,4),(1,4),(4,4),(3,3),(4,2),(3,1),(1,4),(0,1),(3,2),(4,1),(4,2),\cdots\right);\\
S^1&=\left((2,4),(1,4),(2,2),(4,1),(0,0),(2,2),(1,3),(1,3),(3,0),(4,3),(0,1),(3,1),\right.\\
&\left.(2,4),(2,0),(0,1),(2,0),(4,0),(4,1),(1,1),(0,1),(4,2),(0,3),(0,4),(4,1),\cdots\right).\\
 \end{split} \]
By computation,
 \begin{equation*}
  \cM(\cS;L)=\left\lceil\frac{L}{31}\right\rceil= \begin{cases}
 {1},\ & \textrm{for}\ 1\leq L\leq 31,\\
 {2},\ & \textrm{for}\ 31\leq L\leq 62. \\
 \end{cases}
\end{equation*}
$\cS$ is strictly optimal with respect to the bound in \eqref{eq:7}, and also has an optimal family size with respect to the bound in \eqref{eq:8}.
\end{example}
\subsection{Third class of optimal FHS Sets}
\begin{theorem}\label{theorem:4} Fix $\w=(a_0,\cdots, a_{k-1})$ and assume $R(\w)=m$. Let $\theta$ be a primitive element of $\F_{q^2}$ over $\F_q$, $r$ an odd factor of $q-1$, $n'=\frac{q^2-1}{r}$ and $v=(\theta^{rt})_{t=0}^{n'-1}$.
Suppose $f:\F_{q^2}\rightarrow\F_q$ is a $d$-form difference-balanced function.
Define a sequence set
$\cS=:\{S^i=(s^i_t)_{0\leq t\leq n'-1}:0\leq i\leq r-1\}$ by
\begin{equation}\label{eq:26}
 s^i_t=\cT(f(\theta^{i+rt})).
\end{equation}
Then $\cS$ is an $(n',r;q)$-FHS set and for each correlation window length $1\leq L\leq n'$,
 \[ \cM(\cS;L)=\left\lceil\frac{L}{q+1}\right\rceil. \]
Moreover, $\cS$ is a strictly optimal FHS set with optimal partial Hamming correlation with respect to the bound in \eqref{eq:7} and with optimal family size with respect to the bound in \eqref{eq:8}.
\end{theorem}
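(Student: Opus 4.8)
The plan is to imitate the proof of Theorem~\ref{theorem:3}, replacing the elementary cyclotomy used there by the difference-balance structure packaged in Lemma~\ref{lemma:balance}. The whole argument rests on reducing the event $s^i_t=s^j_{t+\tau}$ to membership in a single $\F_q^{*}$-coset and then to a single residue class modulo $q+1$.

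First I would pin down the alphabet and hence the parameters $(n',r;q)$. Since $R(\w)=m$, Lemma~\ref{lemma:tp} gives $\ker(\cT)=V_{\w}^{\bot}=\{0\}$, so $\cT$ is injective on $\F_q$. The exponents $i+rt$ with $0\le i\le r-1$ and $0\le t\le n'-1$ form a complete residue system modulo $q^2-1=rn'$, so $\theta^{i+rt}$ runs over all of $\F_{q^2}^{*}$ as $\theta$ generates $\F_{q^2}^{*}$. Being a $d$-form function, $f$ satisfies $f(0)=0$, hence $f$ is balanced by Pott--Wang; thus $f(\F_{q^2}^{*})=\F_q$ and the alphabet $\{s^i_t\}=\cT(f(\F_{q^2}^{*}))=\Im(\cT)$ has size $p^{R(\w)}=q$. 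This shows $\cS$ is an $(n',r;q)$-FHS set.

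The heart of the proof is the correlation count. Fix a triple $(i,j,\tau)$ with $\tau\not\equiv 0\pmod{n'}$ whenever $i=j$. Because $\cT$ is injective, $s^i_t=s^j_{t+\tau}$ is equivalent to $f(\theta^{i+rt})=f(\theta^{j+r(t+\tau)})$. Writing $x=\theta^{i+rt}$ and $\delta=\theta^{(j-i)+r\tau}$, this reads $f_\delta(x)=0$ with $f_\delta(x)=f(\delta x)-f(x)$. Reducing $(j-i)+r\tau\equiv 0\pmod{rn'}$ first modulo $r$ (forcing $i=j$) and then modulo $n'$ shows $\delta=1$ only in the excluded case, so $\delta\notin\{0,1\}$ throughout. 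Lemma~\ref{lemma:balance} then gives $f_\delta^{-1}(0)=\F_q\,a_\delta$ for a single $a_\delta\in\F_{q^2}^{*}$, so the condition becomes $\theta^{i+rt}\in\F_q^{*}a_\delta$, i.e. $\theta^{rt}\in\F_q^{*}b$ with $b=\theta^{-i}a_\delta$. If no $t$ satisfies this the correlation is $0$; otherwise, fixing one solution $t_0$, the condition becomes $\theta^{r(t-t_0)}\in\F_q^{*}$. Since $\theta$ generates $\F_{q^2}^{*}$, one has $\theta^{k}\in\F_q^{*}$ iff $(q+1)\mid k$; and because $r\mid q-1$ with $r$ odd and $q$ odd, $\gcd(r,q+1)\mid\gcd(q-1,q+1)=2$, which as $r$ is odd forces $\gcd(r,q+1)=1$. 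Hence $s^i_t=s^j_{t+\tau}$ iff $t\equiv t_0\pmod{q+1}$, so any window of length $L$ contains at most $\lceil L/(q+1)\rceil$ such $t$, giving $\cM(\cS;L)\le\lceil L/(q+1)\rceil$.

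Finally I would match this against the bounds. For $(N,M,d')=(n',r,q)$ one has $MN=q^2-1$ and $I=\lfloor (q^2-1)/q\rfloor=q-1$, so bound~\eqref{eq:7} simplifies to $\cM(\cS;L)\ge\lceil\frac{(q-1)(q^2-2)L}{(q^2-2)(q^2-1)}\rceil=\lceil L/(q+1)\rceil$, yielding equality and optimality with respect to \eqref{eq:7}. Taking $L=q+1$ gives $\cM(\cS;L)=1$ and $L>d'\cM(\cS;L)$, whence \eqref{eq:8} reads $M\le\lfloor\frac{1}{N}\lfloor\frac{q^2}{1}\rfloor\rfloor=\lfloor\frac{rq^2}{q^2-1}\rfloor=r$; since $\cS$ has exactly $r$ sequences, its family size is optimal. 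I expect the main obstacle to be precisely the correlation count through Lemma~\ref{lemma:balance}: everything hinges on the zero set of $f_\delta$ being exactly one $\F_q$-line rather than a larger or irregular set, which is what the $d$-form difference-balance hypothesis buys, while the accompanying arithmetic ($\delta\notin\{0,1\}$ and $\gcd(r,q+1)=1$) is routine but indispensable for the coset-to-residue-class reduction.
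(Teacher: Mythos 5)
Your proposal is correct and follows essentially the same route as the paper's own proof: injectivity of $\cT$ (Lemma~\ref{lemma:tp}), reduction of $s^i_t=s^j_{t+\tau}$ to $f_\delta(x)=0$ with $\delta\notin\{0,1\}$, the single-$\F_q$-line structure from Lemma~\ref{lemma:balance}, the residue-class count modulo $q+1$, and then matching against the bounds \eqref{eq:7} and \eqref{eq:8}. In fact you supply details the paper leaves implicit (why $\delta\neq 1$, why oddness of $r$ forces $\gcd(r,q+1)=1$, and why balancedness of $f$ makes the alphabet all of $\Im(\cT)$), and your choice $L=q+1$ in the family-size step is an equally valid alternative to the paper's $L=n'$.
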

\begin{proof} By Lemma~\ref{lemma:tp}, we know  the alphabet size of $\cS$ is $p^m$, and the map $\cT$ is injective, hence
\begin{equation}\label{eq:27}
 {s}^i_{t}=s^j_{t+\tau}\ \Longleftrightarrow\  f(\theta^{i+rt})=f(\theta^{j+rt+r\tau}).
\end{equation}
Fix $i,j,\tau$ such that if $i=j$ then $\tau \not\equiv 0\bmod{n'}$. Then $\delta=\theta^{j-i+r\tau}\notin \{0,1\}$. By Lemma~\ref{lemma:balance}, the set $f^{-1}_\delta(0)$ of solutions of $f(\delta x)-f(x)=0$ is $\F_q a$ for some $a\neq 0$. If there exists $\theta^{i+rt_0}\in f^{-1}_\delta(0)$ (otherwise $H_{S^i, S^j}(\tau; l\mid L)=0$), then $\theta^{i+rt_0}\in \F_q^* a$ and
 \[ {s}^i_{t}=s^j_{t+\tau}\ \Longleftrightarrow\ \theta^{i+rt}\in \F_q^* a\ \Longleftrightarrow\  \theta^{r(t-t_0)}\in \F_q^*. \]
Any $c\in \F^*_q$ is of the form $c=\theta^{(q+1)s}$, Since $(r,q+1)=1$, we have
\[ {s}^i_{t}=s^j_{t+\tau}\ \Longleftrightarrow\  t-t_0\in (q+1)\Z. \]
Therefore, we have $H_{S^i, S^j}(\tau; l\mid L)\leq \left\lceil\frac{L}{q+1}\right\rceil$ and
\begin{equation*}
\cM(\cS;L)\leq \left\lceil\frac{L}{q+1}\right\rceil.
\end{equation*}
We now check the strictly optimality of the sequence set $\cS$, from \eqref{eq:7},
\begin{align*}
\cM(\cS;L)&\geq\left\lceil\frac{L}{N}\cdot\frac{[2INM-(I+1)Id']}{(NM-1)M}\right\rceil\\
&=\left\lceil\frac{Lr}{q^2-1}\cdot\frac{(q-1)(q^2-2)}{(q^2-2)r}\right\rceil
=\left\lceil\frac{L}{q+1}\right\rceil.
\end{align*}
Thus $\cM(\cS;L)=\left\lceil\frac{L}{q+1}\right\rceil$.

Taking $L=\frac{q^2-1}{r}$, then $\cM(\cS;L)=\frac{q-1}{r}$ and $d'\cM(\cS;L)<L$. According to the bound in \eqref{eq:8}, we have
\[M\leq \left\lfloor\frac{1}{N}\left\lfloor \frac{(L-\cM(\cS;L))d'}{L-d'\cM(\cS;L)}\right\rfloor\right\rfloor=\left\lfloor \frac{rq^2}{q^2-1}\right\rfloor=r.\]
Actually, the family size of $\cS$ is exactly $r$.
\end{proof}
\begin{remark}
Based on the theory of Galois ring, Eun et al. \cite{17} obtained a class of optimal individual FHS sequence, which have length $q^2-1$, alphabet size $q$, and maximal partial Hamming autocorrelation $\left\lceil\frac{L}{q+1}\right\rceil$ for each window length $1\leq L\leq q^2-1$. Compared with the construction in \cite{17}, our construction gives new parameters when $r>1$, and may be more flexible by the choice of $f(x)$.
 \end{remark}

\begin{example}
Let $p=q=7$, $r=3$, $k=1$ and $a_0=1$. Assume that $\theta$ is a primitive element of $\F_{7^2}$ over $\F_7$ generated by $\theta^2+3\theta-1=0$.  Set
\[f(x)=\Tr_{7^2/7}(x^5).\]
 Then, $\cS$ consists of the following three FHSs of length 16:
 \[ \begin{split}
 &S^0=(2,3,4,1,0,1,3,3,5,4,3,6,0,6,4,4);\\
 &S^1=(6,0,6,4,4,2,3,4,1,0,1,3,3,5,4,3);\\
 &S^2=(3,3,5,4,3,6,0,6,4,4,2,3,4,1,0,1).\\
 \end{split} \]
 By computer experiments,
 \begin{equation*}
  \cM(\cS;L)=\left\lceil\frac{L}{8}\right\rceil= \begin{cases}
 {1},\ & \textrm{for}\ 1\leq L\leq 8,\\
 {2},\ & \textrm{for}\ 9\leq L\leq 16. \\
 \end{cases}
\end{equation*}
Thus, $\cS$ is strictly optimal with respect to the bound in \eqref{eq:7}, and also has an optimal family size with respect to the bound in \eqref{eq:8}.
\end{example}

\section{Conclusion}
In this paper, we constructed three classes of strictly optimal FHS sets with  optimal partial Hamming correlation and optimal family size.

\end{document}